\newtheorem{proposition}{Proposition}
\newtheorem{remark}{Remark}
\newtheorem{definition}{Definition}
\def\@email#1#2{%
 \endgroup
 \patchcmd{\titleblock@produce}
  {\frontmatter@RRAPformat}
  {\frontmatter@RRAPformat{\produce@RRAP{*#1\href{mailto:#2}{#2}}}\frontmatter@RRAPformat}
  {}{}
}%
\begin{document}

\preprint{AIP/123-QED}

\title{Hamiltonian formulation of the quasineutral Vlasov-Poisson system}
% Force line breaks with \\
\author{J. W. Burby}
\email{joshua.burby@austin.utexas.edu}
 %\altaffiliation[Also at ]{Physics Department, XYZ University.}%Lines break automatically or can be forced with \\
% \author{B. Author}%
%  \email{Second.Author@institution.edu.}
\affiliation{ 
Department of Physics and Institute for Fusion Studies, University of Texas at Austin, Austin, TX 78712, USA%\\This line break forced with \textbackslash\textbackslash
}%
\author{D. A. Kaltsas}
\affiliation{Department of Physics, University of Ioannina, Ioannina, GR 451 10, Greece}
\affiliation{Department of Informatics, Democritus University of Thrace, Kavala, GR 654 04, Greece}
\author{P. J. Morrison}
\affiliation{ 
Department of Physics and Institute for Fusion Studies, University of Texas at Austin, Austin, TX 78712, USA%\\This line break forced with \textbackslash\textbackslash
}
\author{E. Tassi}
\affiliation{Universit{\'e} C{\^o}te d'Azure, Observatoire del C{\^o}te d'Azure, CNRS, Laboratoire Lagrange, Bd de l'Observatoire, CS 34229, 06304 Nice Cedex 4, France}
\author{G. N. Throumoulopoulos}
\affiliation{Department of Physics, University of Ioannina, Ioannina, GR 451 10, Greece}

% \author{C. Author}
%  \homepage{http://www.Second.institution.edu/~Charlie.Author.}
% \affiliation{%
% Second institution and/or address%\\This line break forced% with \\
% }%

\date{\today}% It is always \today, today,
             %  but any date may be explicitly specified

\begin{abstract}
Slow manifold reduction and the theory of Poisson-Dirac submanifolds are used to deduce a Hamiltonian formulation for a quasineutral limit of the planar, collisionless, magnetized Vlasov-Poisson system. Motion on the slow manifold models plasma dynamics free of fast Langmuir oscillations. Preservation of quasineutrality requires the bulk plasma flow is incompressible. The electric field is determined by counterbalancing plasma stresses that would otherwise produce compression. The Hamiltonian structure for the quasineutral model synthesizes well-known Poisson brackets for incompressible fluids and collisionless kinetic equations. 
\end{abstract}

\maketitle

\section{Introduction} 

Non-relativistic approximations of plasma kinetic theory fall into two broad categories, (quasi) electrostatic and quasineutral. Electrostatic approximations entail nearly irrotational electric fields, while quasineutral approximations entail near local charge neutrality. Each approximation corresponds to a singular limit of a relativistic parent model, such as the Vlasov-Maxwell system. Determining whether a given nonrelativistic plasma process falls cleanly into one category or the other generally requires careful consideration. Some non-relativistic plasmas display significant space-charge effects, suggesting applicability of electrostatics and inapplicability of quasineutrality. Other non-relativistic plasmas exhibit strong inductive electric fields, which fall under the purview of quasineutraility while strongly violating electrostatics. Curiously, the quasineutral and electrostatic approximations can also be applied simultaneously\cite{y_brenier_formulation_1989,e_grenier_oscillations_1996,m_griffin-pickering_recent_2020}. The formulations of gyrokinetic theory in Refs.\,\onlinecite{t_s_hahm_nonlinear_1988,h_sugama_gyrokinetic_2000,brizard_foundations_2007,miyato_gyrokinetic_2013} provide concrete applications of this compound approximation, where the quasineutral approximation is applied after applying an electrostatic approximation with perturbative corrections\cite{sugama_conservation_2013}. 

This article identifies a Hamiltonian formulation for the quasineutral limit of the magnetized, planar Vlasov-Poisson system for electrons moving through a frozen neutralizing ion background. This limit model is referred to as the kinetic incompressible Euler system in Ref.\,\onlinecite{m_griffin-pickering_recent_2020}; it arises by applying the quasineutral approximation \emph{after} the electrostatic approximation. The model's Hamiltonian formulation has gone unnoticed even though comparable formulations for the Vlasov-Maxwell system\cite{p_j_morrison_maxwell-vlasov_1980,j_e_marsden_hamiltonian_1982,morrisonPoissonBracketsFluids1982}, the Vlasov-Poisson system\cite{l_a_turski_canonical_1976,p_j_morrison_maxwell-vlasov_1980,j_gibbons_collisionless_1981}, and the quasineutral Vlasov system\cite{c_tronci_neutral_2015,j_w_burby_chasing_2015} are each known separately. None of these previous formulations directly provides the Hamiltonian structure for the quasineutral electrostatic limit. The derivation of the Hamiltonian structure presented here proceeds by first formulating the quasineutral electrostatic model as a slow manifold reduction\cite{a_n_gorban_constructive_2004,r_s_mackay_slow_2004,j_w_burby_slow_2020} of the electrostatic model and then studying the extrinsic Poisson geometry\cite{a_weinstein_local_1983,vaisman_lectures_1994} of the limiting slow manifold in the infinite-dimensional Vlasov-Poisson phase space. It turns out that the slow manifold comprises an example of a Poisson-Dirac submanifold\cite{crainic_integrability_2004}, and therefore inherits a non-trivial Poisson bracket from the ambient phase space. The novel Hamiltonian formulation is summarized in Propositions \ref{PD_proposition} and \ref{hamiltonian_formulation_proposition} below.

This Article is the first in a two-paper sequence investigating the Hamiltonian structure of quasineutral limits of electrostatic plasma models. It establishes the basic ideas in a simplified setting of two space dimensions and frozen background ions. The second paper \cite{Kaltsas_2025} extends the analysis presented here, allowing for any number of space dimensions, dynamical ions, and the alternative Vlasov-Amp\'ere electrostatic model, before conducting a numerical study of the quasineutral electrostatic model in one space dimension. The second paper also provides a sharper characterization of the quasineutral Hamiltonian  structure. Where this Article shows the quasineutral constraint, comprising local charge neutrality and current incompressibility, defines a Poisson-Dirac submanifold, the analysis in \onlinecite{Kaltsas_2025} demonstrates the constraint in fact defines a Poisson transversal, independently of space dimension. In addition, where this Article identifies the Poisson bracket on the quasineutral constraint manifold itself, Ref.\,\onlinecite{Kaltsas_2025} finds a Poisson bracket in an open neighborhood of the constraint manifold that renders the quasineutral constraint a Casimir invariant.

% The results presented here provide a convenient framework for future studies of non-perturbative reduced-order modeling in strongly-magnetized plasmas. Earlier work established a non-perturbative reduced-order model for individual charged particles in a strong static purely magnetic field that exploits the Hamiltonian structure of the Lorentz force in an essential way. Generalizing to include the effects of self-generated fields entails passage from finite-dimensional Hamiltonian systems describing particle trajectories to infinite-dimensional Hamiltonian systems describing field-particle interaction. The planar magnetized quasineutral Vlasov-Poisson model provides perhaps the simplest field-particle model for which 

\section{The magnetized quasineutral Vlasov-Poisson system\label{sec:QNVP}}
The magnetized planar quasineutral Vlasov-Poisson (QNVP) system is defined as follows.
\begin{definition}\label{def1} Fix $\epsilon > 0$ and a nowhere-vanishing function $B:\mathbb{T}^2\rightarrow\mathbb{R}$, where $\mathbb{T}=\mathbb{R} /2\pi\mathbb{Z}$ denotes the $2\pi$-periodic circle.  The magnetized planar QNVP system is given by
\begin{align}
&\partial_tn_0 = 0\label{qn_cont}\\
&\partial_t\bm{\pi} + \epsilon\,\Pi\bigg(\partial_{\bm{q}}\cdot(n_0^{-1}\bm{\pi}\bm{\pi} + n_0\left\langle \bm{\xi}\bm{\xi}\right\rangle)\bigg) =   \Pi(B\mathbb{J}\bm{\pi})\label{qn_vorticity}\\
&\partial_t\varrho+ \partial_{\bm{q}}\cdot\bigg(\epsilon\bigg[n_0^{-1}\bm{\pi} + \bm{\xi}\bigg]\varrho\bigg) + \partial_{\bm{\xi}}\cdot\bigg(\bigg[\epsilon\,\partial_{\bm{q}}\cdot\langle \bm{\xi}\bm{\xi}\rangle - \epsilon\,\bm{\xi}\cdot\partial_{\bm{q}}(n_0^{-1}\bm{\pi}) + B\mathbb{J}\bm{\xi}\bigg]\,\varrho\bigg)\nonumber\\
& - \partial_{\bm{q}}\cdot\bigg(\epsilon\bigg[\langle \bm{\xi}\rangle\bigg]\varrho\bigg) + \partial_{\bm{\xi}}\cdot\bigg(\bigg[\epsilon\,(\partial_{\bm{q}}\langle\bm{\xi}\rangle)\cdot\bm{\xi} - (B-\epsilon\,\Omega)\,\mathbb{J}\langle\bm{\xi}\rangle\bigg]\,\varrho\bigg) = 0,\label{qn_kinetic}
\end{align}
where $n_0\in \mathbb{R}$ represents the electron density, $\bm{\pi}:\mathbb{T}^2\rightarrow\mathbb{R}^2$ is the divergence-free momentum density, and $\varrho:\mathbb{T}^2\times\mathbb{R}^2\rightarrow\mathbb{R}$ denotes the single-electron conditional phase space probability. Points in space are denoted $\bm{q}\in\mathbb{T}^2$, while peculiar velocities are denoted $\bm{\xi}\in \mathbb{R}^2$. The symbol $\Pi$ denotes the $L^2$-projection onto the subspace of divergence-free vector field on $\mathbb{T}^2$. The angle brackets denote conditional expectations, i.e. given $Q:\mathbb{T}^2\times\mathbb{R}^2\rightarrow\mathbb{R}$ the expectation of $Q$ conditioned on $\bm{q}$ is $\langle Q\rangle(\bm{q}) = \int Q(\bm{q},\bm{\xi})\,\varrho(\bm{q},\bm{\xi})\,d\bm{\xi}$. The skew symmetric matrix $\mathbb{J}$ is a counter-clockwise rotation by $\pi/2$ and the fluid vorticity is $\Omega = -\partial_{\bm{q}}\cdot (\mathbb{J}\bm{\pi}/n_0)$.
\end{definition}
\begin{remark}
In this formulation $\varrho$ is an arbitrary positive integrable function on single-particle phase space $\mathbb{T}^2\times\mathbb{R}^2$; it is not required that $\langle 1\rangle=1$ or $\langle \bm{\xi}\rangle = 0$. However it is simple to show that if the latter conditions are satisfied by initial data then they will be satisfied for all time.
\end{remark}
Slow manifold reduction enables deduction of the QNVP system from its parent model, the Vlasov-Poisson (VP) system. This perspective motivates the derivation of QNVP's Hamiltonian formulation in Section \ref{sec:ham_formulation}. Accordingly, this Section presents a derivation of QNVP from VP using the slow manifold idea. It also identifies a presentation of the VP Hamiltonian structure that is well-suited to slow manifold reduction.

In MKS units, the planar, collisionless Vlasov-Poisson system for an electron plasma immersed in a static nowhere-vanishing magnetic field $\bm{B} = B(\bm{q})\,e_z$ is given by
\begin{gather*}
\partial_tf + \bm{v}\cdot\partial_{\bm{q}} f  -q_em_e^{-1}( \partial_{\bm{q}}\varphi + B\mathbb{J}\bm{v})\cdot\partial_{\bm{v}}f = 0\\
-\epsilon_0\Delta \varphi = q_e\int f\,d\bm{v} + \rho_0.
\end{gather*}
Here $f = f(\bm{q},\bm{v})$ denotes the single-electron phase space density and $\varphi = \varphi(\bm{q})$ denotes the electrostatic potential. The electron charge and mass are given by $q_e = -e$ and $m_e$, respectively, while the permittivity of free space is $\epsilon_0$. In the Poisson equation, 
\begin{align*}
\rho_0 = -\frac{q_e\int f\,d\bm{q}\,d\bm{v} }{(2\pi)^2L_0^2}
\end{align*}
represents the uniform neutralizing charge density for a background ion population. The spatial domain $Q\ni \bm{q} = (x,y)$ is assumed doubly-periodic, with period lengths $2\pi L_0$ in each of the $x$- and $y$-directions. The symbol $\Delta= \partial_x^2 + \partial_y^2$ denotes the usual Laplace operator. With the chosen boundary conditions (doubly-periodic), the solution $\varphi$ of Poisson's equation is unique modulo constants. In all that follows the constant is fixed by requiring that the mean electrostatic potential vanishes, $\int\varphi d\bm{q} = 0$.

A dimensionless scaling of the VP system may be formulated as follows. Let $v_0$, $T_0$, $n_0$, $B_0$, and $\varphi_0$ denote characteristic velocity, time, electron density, magnetic field, and electrostatic potential, respectively. Introduce dimensionless field variables using the substitutions
\begin{align*}
f\rightarrow \frac{n_0}{v_0^2}f,\quad B\rightarrow B_0\,B,\quad \varphi\rightarrow \varphi_0\,\varphi,
\end{align*}
and dimensionless independent variables using
\begin{align*}
\bm{q}\rightarrow L_0\,\bm{q},\quad \bm{v}\rightarrow v_0\,\bm{v},\quad t\rightarrow T_0\,t.
\end{align*}
Finally, align the observation timescale with the electron cyclotron period, $T_0 = m_e/(eB_0)$. The VP system may then be written in dimensionless form as
\begin{gather}
\partial_t f + \epsilon\,\bm{v}\cdot\partial_{\bm{q}}f + \bigg(\epsilon\,\Lambda\,\partial_{\bm{q}}\varphi + B\,\mathbb{J}\bm{v} \bigg)\cdot\partial_{\bm{v}}f = 0\label{VP_vlasov}\\
\delta^2\,\Lambda\,\Delta\varphi = \int f\,d\bm{v} - \frac{1}{(2\pi)^2}\int f\,d\bm{q}\,d\bm{v}\label{VP_poisson},
\end{gather}
where three dimensionless parameters appear:
\begin{gather}
\epsilon = \frac{m_ev_0}{e\,B_0\,L_0},\quad \Lambda = \frac{e\varphi_0}{m_ev_0^2},\quad \delta^2 = \frac{\epsilon_0\,m_e\,v_0^2}{e^2\,n_0\,L_0^2}.
\end{gather}
The first two parameters, $\epsilon$ and $\Lambda$, denote ratios of electron gyroradius to field scale length and electron kinetic energy to electrostatic potential energy. Going forward, these parameters will be fixed. The third parameter, $\delta^2$, denotes the (squared) ratio of Debye length to field scale length. It plays a central role in quasineutral asymptotics.

\begin{remark}In the dimensionless VP system, \eqref{VP_vlasov}-\eqref{VP_poisson}, the dimensionless fields $f,\varphi,B$ are doubly-periodic in $\bm{q}$, now with period $2\pi$ in each of the two spatial directions. The potential $\varphi$ is still required to satisfy the normalization condition $\int \varphi d\bm{q} = 0$ in order to ensure uniqueness when solving the Poisson equation.
\end{remark}

The system \eqref{VP_vlasov}-\eqref{VP_poisson} comprises an infinite-dimensional Hamiltonian system on the space $\mathcal{P}_{\text{VP}}$ of distribution functions $f:\mathbb{T}^2\times\mathbb{R}^2\rightarrow \mathbb{R}$, where $\mathbb{T}^2$ denotes the $2$-torus with period $2\pi$ in each direction. The Hamiltonian functional is
\begin{align}
\mathcal{H}_{\text{VP}}(f) = \frac{1}{2}\int |\bm{v}|^2\,f\,d\bm{q}\,d\bm{v} + \delta^2\,\Lambda^2\frac{1}{2}\int \left|\partial_{\bm{q}}\widehat{\varphi}\left(\int f\,d\bm{v}\right)\right|^2\,d\bm{q},\label{VP_hamiltonian}
\end{align}
where $\widehat{\varphi}$ denotes the linear operator that assigns to each $n = \int f\,d\bm{v}$ the unique solution of \eqref{VP_poisson} with $\int \widehat{\varphi}(n)\,d\bm{q} = 0$. The Poisson bracket between functionals $F,G:\mathcal{P}_{\text{VP}}\rightarrow\mathbb{R}$ is given by the well-known formula
\begin{align}
\{F,G\}_{\text{VP}} & = \epsilon\int \bigg(\partial_{\bm{q}}\frac{\delta F}{\delta f}\cdot \partial_{\bm{v}}\frac{\delta G}{\delta f} -\partial_{\bm{q}}\frac{\delta G}{\delta f}\cdot \partial_{\bm{v}}\frac{\delta F}{\delta f} \bigg)\,f\,d\bm{q}\,d\bm{v} + \int B\,\partial_{\bm{v}}\frac{\delta F}{\delta f}\cdot\mathbb{J}\partial_{\bm{v}}\frac{\delta G}{\delta f}\,f\,d\bm{q}\,d\bm{v},\nonumber\\
& = \int \left\{\frac{\delta F}{\delta f},\frac{\delta G}{\delta f}\right\}_0f\,d\bm{q}\,d\bm{v},
\end{align}
where
\begin{align*}
\{h,k\}_0 = \epsilon\bigg(\partial_{\bm{q}}h\cdot\partial_{\bm{v}}k - \partial_{\bm{q}}k\cdot\partial_{\bm{v}}h\bigg) + B\partial_{\bm{v}}h\cdot\mathbb{J}\partial_{\bm{v}}k,
\end{align*}
denotes the single-particle Poisson bracket between functions on phase space, $h,k:\mathbb{T}^2\times \mathbb{R}^2\rightarrow\mathbb{R}$.

In order to deduce the quasineutral limit of the Vlasov-Poisson system it is helpful to modify the formulation of VP at the level of its Hamiltonian structure.  The modification will explicitly separate the density and momentum density evolution from higher-order moments using Lie-theoretic methods developed by Krishnaprasad-Marsden\cite{krishnaprasad_hamiltonian_1987} and C. Tronci\cite{tronci_hamiltonian_2010}. To that end, consider the pair of infinite-dimensional Lie algebras
\begin{align*}
\mathfrak{h} = \{(\psi,\bm{u})\mid \psi:\mathbb{T}^2\rightarrow \mathbb{R},\quad \bm{u}:\mathbb{T}^2\rightarrow\mathbb{R}^2\},\quad \mathfrak{g} = \{\chi\mid \chi:\mathbb{T}^2\times\mathbb{R}^2\rightarrow\mathbb{R}\}.
\end{align*}
The Lie brackets on $\mathfrak{h}$ and $\mathfrak{g}$ are given by
\begin{align*}
[(\psi_1,\bm{u}_1),(\psi_2,\bm{u}_2)]_{\mathfrak{h}} &= -\bigg(\epsilon\left(\bm{u}_1\cdot\partial_{\bm{q}}\psi_2 - \bm{u}_2\cdot\partial_{\bm{q}}\psi_1\right) - B\bm{u}_1\cdot\mathbb{J}\bm{u}_2,\quad\epsilon\left(\bm{u}_1\cdot\partial_{\bm{q}}\bm{u}_2 - \bm{u}_2\cdot\partial_{\bm{q}}\bm{u}_1\right)\bigg)\\
[\chi_1,\chi_2]_{\mathfrak{g}} &=\{\chi_1,\chi_2\}_0.
\end{align*}
The algebra $\mathfrak{h}$ arises as a subalgebra of $\mathfrak{g}$ by identifying each pair $(\psi,\bm{u})$ with the phase space function $\psi(\bm{q})+ \bm{v}\cdot\bm{u}(\bm{q})$. Each element of $\mathfrak{h}$ therefore acts as a (Lie) derivation on $\mathfrak{g}$ according to
\begin{align*}
(\psi,\bm{u})\cdot\chi = \{\psi(\bm{q}) + \bm{v}\cdot\bm{u}(\bm{q}),\chi\}_0.
\end{align*}
It follows that the space $\mathfrak{s} = \mathfrak{h}\times\mathfrak{g}$ enjoys a semi-direct product Lie algebra structure, with Lie bracket given by
\begin{align*}
[((\psi_1,\bm{u}_1),\chi_1),((\psi_2,\bm{u}_2),\chi_2)]_{\mathfrak{s}} = ([(\psi_1,\bm{u}_1),(\psi_2,\bm{u}_2)]_{\mathfrak{h}},[\chi_1,\chi_2]_{\mathfrak{g}} + (\psi_1,\bm{u}_1)\cdot \chi_2 - (\psi_2,\bm{u}_2)\cdot \chi_1).
\end{align*}
The dual space $\mathfrak{s}^* = \mathfrak{h}^*\times\mathfrak{g}^*$ may therefore be equipped with a corresponding Lie-Poisson bracket. Elements of $\mathfrak{s}^*$ are triples $(n,\bm{P},f)$, where $n$ is a function on $\mathbb{T}^2$, $\bm{P}$ is a vector field on $\mathbb{T}^2$, and $f$ is a function on single-particle phase space $\mathbb{T}^2\times\mathbb{R}^2$. The duality pairing between $\mathfrak{s}^*$ and $\mathfrak{s}$ is given explicitly by
\begin{align*}
\left\langle (n,\bm{P},f),(\psi,\bm{u},\chi) \right\rangle  = \int \psi\,n\,d\bm{q} + \int \bm{u}\cdot\bm{P}\,d\bm{q} + \int \chi\,f\,d\bm{q}\,d\bm{v}.
\end{align*}
The Lie-Poisson bracket between functions on $\mathfrak{s}^*$, $F,G:\mathfrak{s}^*\rightarrow\mathbb{R}$ is
\begin{align}
\{F,G\}_{\mathfrak{s}^*} &= -\int \bigg(\epsilon\left(\frac{\delta F}{\delta\bm{P}}\cdot\partial_{\bm{q}}\frac{\delta G}{\delta n} - \frac{\delta G}{\delta\bm{P}}\cdot\partial_{\bm{q}}\frac{\delta F}{\delta n}\right) - B\,\frac{\delta F}{\delta\bm{P}}\cdot \mathbb{J}\frac{\delta G}{\delta\bm{P}}\bigg)\,n\,d\bm{q}\nonumber\\
& -\int \epsilon\bigg(\frac{\delta F}{\delta\bm{P}}\cdot\partial_{\bm{q}}\frac{\delta G}{\delta\bm{P}}-\frac{\delta G}{\delta\bm{P}}\cdot\partial_{\bm{q}}\frac{\delta F}{\delta\bm{P}}\bigg)\cdot\bm{P}\,d\bm{q}\nonumber\\
& + \int \bigg(\left\{\frac{\delta F}{\delta f},\frac{\delta G}{\delta f}\right\}_0 + \left\{\frac{\delta F}{\delta n} + \bm{v}\cdot\frac{\delta F}{\delta \bm{P}},\frac{\delta G}{\delta f}\right\}_0-\left\{\frac{\delta G}{\delta n} + \bm{v}\cdot\frac{\delta G}{\delta \bm{P}},\frac{\delta F}{\delta f}\right\}_0\bigg)\,f\,d\bm{q}\,d\bm{v}.\label{s_bracket}
\end{align}
This algebraic structure is helpful because the map $\mathcal{C}:\mathcal{P}_{\text{VP}}\rightarrow\mathfrak{s}^*$ given by
\begin{align*}
f\mapsto \bigg(\int f\,d\bm{v},\int \bm{v}\,f\,d\bm{v},f\bigg)
\end{align*}
is Poisson due to the commutation relation 
\begin{align*}
\{\langle \mathcal{C},(\psi_1,\bm{u}_1,\chi_1)\rangle,\langle \mathcal{C},(\psi_2,\bm{u}_2,\chi_2)\rangle \}_{\mathcal{P}_{\text{VP}}} = \langle \mathcal{C},[(\psi_1,\bm{u}_1,\chi_1),(\psi_2,\bm{u}_2,\chi_2)]_{\mathfrak{s}}\rangle.
\end{align*}
The Poisson property allows for replacing Hamilton's equations on $\mathcal{P}_{\text{VP}}$ with Hamilton's equations on $\mathfrak{s}^*$ provided the VP Hamiltonian $\mathcal{H}_{\text{VP}}$ relates to a Hamiltonian on $\mathcal{H}_{\mathfrak{s}^*}$ on $\mathfrak{s}^*$ by the Guillemin-Sternberg collectivization\cite{guillemin_moment_1980} formula $\mathcal{H}_{\mathcal{P}_{\text{VP}}} = \mathcal{H}_{\mathfrak{s}^*}\circ \mathcal{C}$. But this can be achieved using
\begin{align*}
\mathcal{H}_{\mathfrak{s}^*}(n,\bm{P},f) = \frac{1}{2}\int |\bm{v} - n^{-1}\bm{P}|^2\,f\,d\bm{v}\,d\bm{q} + \frac{1}{2}\int|n^{-1}\bm{P}|^2\,n\,d\bm{q} + \delta^2\,\Lambda^2\frac{1}{2}\int |\partial_{\bm{q}}\widehat{\varphi}(n)|^2\,d\bm{q}.
\end{align*}
It follows that any solution of Hamilton's equations on $\mathcal{P}_{\text{VP}}$ with Hamiltonian $\mathcal{H}_{\mathcal{P}_{\text{VP}}}$ maps (along $\mathcal{C}$) to a solution of Hamilton's equations with Hamiltonian $\mathcal{H}_{\mathfrak{s}^*}$ on $\mathfrak{s}^*$. This justifies parting ways with Hamilton's equations on $\mathcal{P}_{\text{VP}}$, i.e. Eqs.\,\eqref{VP_vlasov}-\eqref{VP_poisson}, in favor of Hamilton's equations on $\mathfrak{s}^*$, with Poisson bracket \eqref{s_bracket} and Hamiltonian $\mathcal{H}_{\mathfrak{s}^*}(n,\bm{P},f)$. 
\begin{remark}
Note that the space $\mathfrak{s}^*$ is larger than $\mathcal{P}_{\text{VP}}$ because, for an arbitrary $(n,\bm{P},f)\in\mathfrak{s}^*$, it need not be true that $n = \int f\,d\bm{v}$ and $\bm{P} = \int \bm{v}f\,d\bm{v}$. In other words, each state in $\mathfrak{s}^*$ contains redundant information. This redundancy can be eliminated by choosing initial conditions for Hamilton's equations on $\mathfrak{s}^*$ in the image of the Poisson map $\mathcal{C}$.
\end{remark}

The reformulation of Vlasov-Poisson on $\mathfrak{s}^*\ni (n,\bm{P},f)$ just obtained does not \emph{cleanly} separate the first two moments, $(n,\bm{P})$, from all higher-order moments since $f$ (redundantly) encodes both $n$ and $\bm{P}$. A different parameterization of $\mathfrak{s}^*$ alleviates this tension. Consider the invertible transformation $E:\mathfrak{s}^*\rightarrow\mathfrak{s}^*$ defined according to $E(n,\bm{P},f) = (n,\bm{P},\widehat{\varrho}(n,\bm{P},f))$, with
\begin{align*}
\widehat{\varrho}(n,\bm{P},f)(\bm{q},\bm{\xi}) = f(\bm{q},\bm{\xi} + \bm{P}(\bm{q})/n(\bm{q}))/n(\bm{q}).
\end{align*}
For $(n,\bm{P},f)$ in the image of $\mathcal{C}$, the field $\varrho(\bm{q}, \bm{\xi})$, $\bm{\xi}\in\mathbb{R}^2$, represents the centered electron velocity distribution, conditioned on $\bm{q}$. Under this invertible transformation from $(n,\bm{P},f)$-space to $(n,\bm{P},\varrho)$-space, the Hamiltonian function transforms from $\mathcal{H}_{\mathfrak{s}^*}$ to $\mathcal{H}_E$, where
\begin{align}
\mathcal{H}_E(n,\bm{P},\varrho) &=\mathcal{H}_{\mathfrak{s}^*}(n,\bm{P},n\,\tau_{n^{-1}\bm{P}}^*\varrho)\nonumber \\
& = \frac{1}{2}\int |\bm{\xi}|^2\,n\,\varrho\,d\bm{\xi}\,d\bm{q} + \frac{1}{2}\int|n^{-1}\bm{P}|^2\,n\,d\bm{q} + \delta^2\,\Lambda^2\frac{1}{2}\int |\partial_{\bm{q}}\widehat{\varphi}(n)|^2\,d\bm{q}.\label{VP_chamiltonian}
\end{align}
Here $\tau_{n^{-1}\bm{P}}(\bm{q},\bm{v}) = (\bm{q},\bm{v} - n(\bm{q})^{-1}\bm{P}(\bm{q}))$ and $n\,\tau_{n^{-1}\bm{P}}^*\varrho(\bm{q},\bm{v}) = n(\bm{q})\varrho(\bm{q},\bm{v} - n^{-1}\bm{P})$. The Poisson bracket transforms from $\{\cdot,\cdot\}_{\mathfrak{s}^*}$ to $\{\cdot,\cdot\}_E$, where $\{F,G\}_E = E_*\{E^*F,E^*G\}_{\mathfrak{s}^*}$.
By the chain rule, the functional derivatives of $\overline{F} = E^*F$ are given explicitly by
\begin{gather*}
\frac{\delta \overline{F}}{\delta \varrho} = \tau_{n^{-1}\bm{P}}^*\left(\frac{1}{n}\frac{\delta F}{\delta\varrho}\right),\quad \frac{\delta\overline{F}}{\delta \bm{P}} = \frac{\delta F}{\delta\bm{P}} - \left\langle\partial_{\bm{\xi}}\frac{1}{n}\frac{\delta F}{\delta\varrho}\right\rangle\\
\frac{\delta \overline{F}}{\delta n} = \frac{\delta F}{\delta n} - \left\langle \frac{1}{n}\frac{\delta F}{\delta\varrho}\right\rangle + n^{-1}\bm{P}\cdot\left\langle\partial_{\bm{\xi}}\frac{1}{n}\frac{\delta F}{\delta\varrho}\right\rangle,\quad \langle Q\rangle = \int Q\,\varrho\,d\bm{\xi}.
\end{gather*}
The Poisson bracket $\{\cdot,\cdot\}_E$ is therefore given explicitly by
\begin{align}
&\{F,G\}_{E} = -\epsilon\int \bigg(\bigg[\frac{\delta F}{\delta\bm{P}} - \bigg\langle\partial_{\bm{\xi}}\frac{1}{n}\frac{\delta F}{\delta\varrho}\bigg\rangle\bigg]\cdot\partial_{\bm{q}}\bigg[\frac{\delta G}{\delta n} - \left\langle \frac{1}{n}\frac{\delta G}{\delta\varrho}\right\rangle + n^{-1}\bm{P}\cdot\left\langle\partial_{\bm{\xi}}\frac{1}{n}\frac{\delta G}{\delta\varrho}\right\rangle\bigg]\nonumber\\
&\quad\quad\quad\qquad\quad\hspace*{.25em}\, - \bigg[\frac{\delta G}{\delta\bm{P}} - \left\langle\partial_{\bm{\xi}}\frac{1}{n}\frac{\delta G}{\delta\varrho}\right\rangle\bigg]\cdot\partial_{\bm{q}}\left[\frac{\delta F}{\delta n} - \left\langle \frac{1}{n}\frac{\delta F}{\delta\varrho}\right\rangle + n^{-1}\bm{P}\cdot\left\langle\partial_{\bm{\xi}}\frac{1}{n}\frac{\delta F}{\delta\varrho}\right\rangle\right] \bigg) \,n\,d\bm{q}\nonumber\\
%%%
&+\int B\,\bigg[\frac{\delta F}{\delta\bm{P}} - \left\langle\partial_{\bm{\xi}}\frac{1}{n}\frac{\delta F}{\delta\varrho}\right\rangle\bigg]\cdot \mathbb{J}\bigg[\frac{\delta G}{\delta\bm{P}} - \left\langle\partial_{\bm{\xi}}\frac{1}{n}\frac{\delta G}{\delta\varrho}\right\rangle\bigg]\,n\,d\bm{q}\nonumber\\
%%%
& -\epsilon\int\left[\frac{\delta F}{\delta\bm{P}} - \left\langle\partial_{\bm{\xi}}\frac{1}{n}\frac{\delta F}{\delta\varrho}\right\rangle,\frac{\delta G}{\delta\bm{P}} - \left\langle\partial_{\bm{\xi}}\frac{1}{n}\frac{\delta G}{\delta\varrho}\right\rangle\right]\cdot\bm{P}\,d\bm{q}\nonumber\\
%%%
& + \int \left\{\frac{1}{n}\frac{\delta F}{\delta \varrho},\frac{1}{n}\frac{\delta G}{\delta \varrho}\right\}_e \,n\,\varrho\,d\bm{q}\,d\bm{\xi}\nonumber\\
%%%
&+\int\bigg(\left\{\frac{\delta F}{\delta n} - \left\langle \frac{1}{n}\frac{\delta F}{\delta\varrho}\right\rangle + n^{-1}\bm{P}\cdot\frac{\delta F}{\delta \bm{P}} + \bm{\xi}\cdot\frac{\delta F}{\delta\bm{P}} - \bm{\xi}\cdot\left\langle\partial_{\bm{\xi}}\frac{1}{n}\frac{\delta F}{\delta\varrho}\right\rangle,\frac{1}{n}\frac{\delta G}{\delta \varrho}\right\}_e\nonumber\\
&\qquad-\left\{\frac{\delta G}{\delta n} - \left\langle \frac{1}{n}\frac{\delta G}{\delta\varrho}\right\rangle + n^{-1}\bm{P}\cdot\frac{\delta G}{\delta \bm{P}} + \bm{\xi}\cdot\frac{\delta G}{\delta\bm{P}} - \bm{\xi}\cdot\left\langle\partial_{\bm{\xi}}\frac{1}{n}\frac{\delta G}{\delta\varrho}\right\rangle,\frac{1}{n}\frac{\delta F}{\delta \varrho}\right\}_e\bigg)n\,\varrho\,d\bm{q}\,d\bm{\xi}.
\end{align}
Here, the single-electron bracket $\{\cdot,\cdot\}_e$ is given by
\begin{align*}
\{h,k\}_e & = \epsilon\bigg(\partial_{\bm{q}}h\cdot\partial_{\bm{\xi}}k - \partial_{\bm{q}}k\cdot\partial_{\bm{\xi}}h\bigg) + (B - \epsilon\,\Omega)\partial_{\bm{\xi}}h\cdot\mathbb{J}\partial_{\bm{\xi}}k,\quad \Omega = -\partial_{\bm{q}}\cdot(\mathbb{J}n^{-1}\bm{P}),
\end{align*}
where the scalar $\Omega = \text{curl}(\bm{P}/n)$ denotes the fluid vorticity. This expression for the Poisson bracket $\{\cdot,\cdot\}_E$ does not explicitly account for certain cancellations that simplify subsequent calculations and clarify the relationship between $\{\cdot,\cdot\}_E$ and well-known bracket formulas for other fluid and kinetic plasma models. Carefully accounting for the cancellations leads to the somewhat simplified formula
\begin{align}
&\{F,G\}_{E} =-\epsilon\int\bigg(\frac{\delta F}{\delta\bm{P}}\cdot \bigg[\partial_{\bm{q}}\frac{\delta G}{\delta n} - \partial_{\bm{q}}\left\langle\frac{1}{n}\frac{\delta G}{\delta\varrho}\right\rangle + \left\langle\partial_{\bm{q}}\frac{1}{n}\frac{\delta G}{\delta\varrho}\right\rangle\bigg]-\frac{\delta G}{\delta\bm{P}}\cdot \bigg[\partial_{\bm{q}}\frac{\delta F}{\delta n} - \partial_{\bm{q}}\left\langle\frac{1}{n}\frac{\delta F}{\delta\varrho}\right\rangle + \left\langle\partial_{\bm{q}}\frac{1}{n}\frac{\delta F}{\delta\varrho}\right\rangle\bigg]\bigg)\,n\,d\bm{q}\nonumber\\
&-\epsilon\int\bigg[\frac{\delta F}{\delta\bm{P}},\frac{\delta G}{\delta\bm{P}}\bigg]\cdot\bm{P}\,d\bm{q} +\int B\,\frac{\delta F}{\delta\bm{P}}\cdot \mathbb{J}\frac{\delta G}{\delta\bm{P}} \,n\,d\bm{q}\nonumber\\
%%%
&+\epsilon\int \bigg(\partial_{\bm{q}}\frac{1}{n}\frac{\delta F}{\delta \varrho}\cdot \partial_{\bm{\xi}}\frac{1}{n}\frac{\delta G}{\delta\varrho} -\partial_{\bm{q}}\frac{1}{n}\frac{\delta G}{\delta \varrho}\cdot \partial_{\bm{\xi}}\frac{1}{n}\frac{\delta F}{\delta\varrho} \bigg)\,n\,\varrho\,d\bm{\xi}\,d\bm{q} + \int (B-\epsilon\,\Omega)\bigg(\partial_{\bm{\xi}}\frac{1}{n}\frac{\delta F}{\delta\varrho}\bigg)\cdot\mathbb{J}\bigg(\partial_{\bm{\xi}}\frac{1}{n}\frac{\delta G}{\delta\varrho}\bigg)\,n\,\varrho\,d\bm{\xi}\,d\bm{q}\nonumber\\
%%%
&-\epsilon\int \bigg(\left\langle\partial_{\bm{q}}\frac{1}{n}\frac{\delta F}{\delta\varrho}\right\rangle\cdot\left\langle\partial_{\bm{\xi}}\frac{1}{n}\frac{\delta G}{\delta\varrho}\right\rangle-\left\langle\partial_{\bm{q}}\frac{1}{n}\frac{\delta G}{\delta\varrho}\right\rangle\cdot\left\langle\partial_{\bm{\xi}}\frac{1}{n}\frac{\delta F}{\delta\varrho}\right\rangle\bigg)\,n\,d\bm{q}-\int (B-\epsilon\,\Omega)\, \left\langle\partial_{\bm{\xi}}\frac{1}{n}\frac{\delta F}{\delta\varrho}\right\rangle\cdot \mathbb{J} \left\langle\partial_{\bm{\xi}}\frac{1}{n}\frac{\delta G}{\delta\varrho}\right\rangle\,n\,d\bm{q}\nonumber\\
%%%
&- \epsilon \int \bigg(\left[\frac{\delta F}{\delta\bm{P}} - \left\langle\partial_{\bm{\xi}}\frac{1}{n}\frac{\delta F}{\delta\varrho}\right\rangle\right]\cdot\partial_{\bm{q}}\cdot\left[n\left\langle\partial_{\bm{\xi}}\left(\frac{1}{n}\frac{\delta G}{\delta\varrho}\right)\bm{\xi}\right\rangle\right] -\left[\frac{\delta G}{\delta\bm{P}} - \left\langle\partial_{\bm{\xi}}\frac{1}{n}\frac{\delta G}{\delta\varrho}\right\rangle\right]\cdot\partial_{\bm{q}}\cdot\left[n\left\langle\partial_{\bm{\xi}}\left(\frac{1}{n}\frac{\delta F}{\delta\varrho}\right)\bm{\xi}\right\rangle\right] \bigg)\,d\bm{q}.\label{VP_cbracket}
\end{align}

Using these formulas for the Poisson bracket $\{\cdot,\cdot\}_E$ and Hamiltonian $\mathcal{H}_E$ on $(n,\bm{P},\varrho)$-space, Hamilton's equations become
\begin{gather}
\partial_tn + \epsilon\,\partial_{\bm{q}}\cdot\bm{P}  = 0,\quad \partial_t\bm{P} + \epsilon\,\partial_{\bm{q}}\cdot(n^{-1}\bm{P}\bm{P}) = - \epsilon\,\partial_{\bm{q}}\cdot(n\langle \bm{\xi}\bm{\xi}\rangle) + \epsilon\,\Lambda\,n\,\partial_{\bm{q}}\widehat{\varphi}(n) + B\mathbb{J}\bm{P},\label{fs_fluid}\\
\partial_t(n\varrho) + \partial_{\bm{q}}\cdot\bigg(\epsilon\bigg[n^{-1}\bm{P} + \bm{\xi}\bigg]n\varrho\bigg) + \partial_{\bm{\xi}}\cdot\bigg(\bigg[\epsilon\,n^{-1}\partial_{\bm{q}}\cdot(n\langle \bm{\xi}\bm{\xi}\rangle) - \epsilon\,\bm{\xi}\cdot\partial_{\bm{q}}(n^{-1}\bm{P}) + B\mathbb{J}\bm{\xi}\bigg]\,n\varrho\bigg)\nonumber\\
 - \partial_{\bm{q}}\cdot\bigg(\epsilon\bigg[\langle \bm{\xi}\rangle\bigg]n\varrho\bigg) + \partial_{\bm{\xi}}\cdot\bigg(\bigg[\epsilon\,(\partial_{\bm{q}}\langle\bm{\xi}\rangle)\cdot\bm{\xi} - (B-\epsilon\,\Omega)\,\mathbb{J}\langle\bm{\xi}\rangle\bigg]n\,\varrho\bigg) = 0,\label{fs_kinetic}
\end{gather}
which provides the ultimate reformulation of the VP system that will be referred to in the remainder of this Article. For later reference, it is also convenient to record the form of Hamilton's equations with a general Hamiltonian $G:\mathfrak{s}^*\rightarrow\mathbb{R}$. The result is
\begin{align}
&\partial_tn +\partial_{\bm{q}}\cdot\left(n\frac{\delta G}{\delta\bm{P}}\right) = 0 \label{ndot_G}\\
%%%
&\partial_t\bm{P} + \epsilon\,n\,\partial_{\bm{q}}\left(\frac{\delta G}{\delta\bm{P}}\cdot \frac{\bm{P}}{n}\right) + \epsilon\,\partial_{\bm{q}}\cdot\left(n\,\frac{\delta G}{\delta\bm{P}}\right)\frac{\bm{P}}{n} + \epsilon\,n\,\Omega\mathbb{J}\frac{\delta G}{\delta \bm{P}} =-\epsilon\,\partial_{\bm{q}}\cdot\left(n\left\langle\partial_{\bm{\xi}}\left(\frac{1}{n}\frac{\delta G}{\delta\varrho}\right)\bm{\xi} \right\rangle\right) \nonumber\\
&\hspace*{21em}- \epsilon\,n\,\bigg(\partial_{\bm{q}}\frac{\delta G}{\delta n} - \partial_{\bm{q}}\left\langle \frac{1}{n}\frac{\delta G}{\delta\varrho}\right\rangle + \left\langle\partial_{\bm{q}}\frac{1}{n}\frac{\delta G}{\delta\varrho} \right\rangle\bigg) + n\,B\,\mathbb{J}\frac{\delta G}{\delta\bm{P}} \label{Pdot_G}\\
%%%
&\partial_t(n\varrho) + \partial_{\bm{q}}\cdot\bigg(\bigg[\epsilon\,\frac{\delta G}{\delta\bm{P}} +\epsilon\partial_{\bm{\xi}}\frac{1}{n}\frac{\delta G}{\delta\varrho}- \epsilon\left\langle\partial_{\bm{\xi}}\frac{1}{n}\frac{\delta G}{\delta\varrho}\right\rangle\bigg]\,n\varrho\bigg)\nonumber\\
&+ \partial_{\bm{\xi}}\cdot\bigg(\bigg[-\epsilon\,\partial_{\bm{q}}\left(\bm{\xi}\cdot\frac{\delta G}{\delta\bm{P}}\right)+\epsilon\,\partial_{\bm{q}}\left(\bm{\xi}\cdot \left\langle\partial_{\bm{\xi}}\frac{1}{n}\frac{\delta G}{\delta\varrho}\right\rangle\right) + \epsilon\,n^{-1}\,\partial_{\bm{q}}\cdot\left(n\left\langle\partial_{\bm{\xi}}\left(\frac{1}{n}\frac{\delta G}{\delta\varrho}\right)\bm{\xi}\right\rangle\right)\nonumber\\
&\qquad\qquad-\epsilon\,\partial_{\bm{q}}\left(\frac{1}{n}\frac{\delta G}{\delta\varrho}\right) + \epsilon\,\left\langle\partial_{\bm{q}}\left(\frac{1}{n}\frac{\delta G}{\delta\varrho}\right)\right\rangle + (B - \epsilon\Omega)\mathbb{J}\bigg(\partial_{\bm{\xi}}\frac{1}{n}\frac{\delta G}{\delta \varrho} - \left\langle \partial_{\bm{\xi}}\frac{1}{n}\frac{\delta G}{\delta \varrho}\right\rangle\bigg)\bigg]\,n\varrho\bigg) = 0.\label{varrhodot_G}
\end{align}

Roughly speaking, the QNVP system is the limit of the VP system as $\delta\rightarrow 0$, with $\epsilon$ and $\Lambda$ held fixed. This limit is singular because the operator $\widehat{\varphi}(n) = O(1/\delta^2)$, which implies an exploding electrostatic force unless the electron charge density very nearly cancels the background ion charge density. So it is more precise to say that the QNVP system is the model that describes solutions of the VP system that remain regular as $\delta\rightarrow 0$. 

Suppose $(n_\delta,\bm{P}_\delta,\varrho_\delta)$ is a $\delta$-dependent solution of the (reformulated) VP system \eqref{fs_fluid}-\eqref{fs_kinetic} that is regular as $\delta\rightarrow 0$. In particular assume that $\partial_tn_\delta$, $\partial_t\bm{P}_\delta$, and $\partial_t{\varrho}_\delta$ are each $O(1)$ in the limit. Let $(n,\bm{P},\varrho) = (n_0,\bm{P}_0,\varrho_0)$. Then multiplying the divergence of the momentum equation by $\delta^2$ and sending $\delta\rightarrow 0$ implies $n - (2\pi)^{-2}\int n\,d\bm{q} = 0$; the electron charge density exactly cancels the background ion charge density in the limit. Since $\partial_tn + \epsilon\,\partial_{\bm{q}}\cdot\bm{P} = 0$, the time derivative of the charge neutrality relation implies $\partial_{\bm{q}}\cdot\bm{P} = 0$. Thus, the electron momentum density must be divergence-free in the limit. Equivalently (by constancy of limiting electron density), the limiting electron flow must be incompressible. 

The preceding pair of observations motivates exchanging the dependent variables $(n,\bm{P})$ with a nicer set that ``blows up" the singularity $\delta\rightarrow 0$. Suppose $\bm{P}$ is the momentum density at some time. By the Hodge decomposition on $\mathbb{T}^2$ there is a unique function $\Phi:\mathbb{T}^2\rightarrow\mathbb{R}$ with vanishing mean and a unique divergence-free vector field $\bm{\pi}$ such that $\bm{P} =\partial_{\bm{q}}\Phi + \bm{\pi} $. Similarly, the electron density decomposes uniquely as $n = n_0 + \delta\,\widetilde{n}$, where $n_0$ is spatially-constant and $\widetilde{n}$ has vanishing mean. When expressed in terms of $(\Phi,\bm{\pi},n_0,\widetilde{n})$ the fluid moment evolution equations become
\begin{align}
&\partial_tn_0 = 0\label{fs_mean_density}\\
& \partial_t\widetilde{n} + \frac{\epsilon}{\delta}\Delta\Phi = 0\label{fs_continuity}\\
& \partial_t \Delta\Phi + \epsilon\,\partial_{\bm{q}}\partial_{\bm{q}}:(n^{-1}\bm{P}\bm{P} + n\left\langle\bm{\xi}\bm{\xi}\right\rangle) = \frac{\epsilon}{\delta}n_0\,\widetilde{n} + \epsilon\,\partial_{\bm{q}}\widetilde{n}\cdot \partial_{\bm{q}}\mathcal{G}[\widetilde{n}] + \partial_{\bm{q}}\cdot(B\mathbb{J}\bm{P})\label{fs_potential}\\
&\partial_t\bm{\pi} + \epsilon\,\Pi\bigg(\partial_{\bm{q}}\cdot(n^{-1}\bm{P}\bm{P} + n\left\langle \bm{\xi}\bm{\xi}\right\rangle)\bigg) = \epsilon\,\Pi\bigg(\widetilde{n}\partial_{\bm{q}}\mathcal{G}[\widetilde{n}]\bigg) + \Pi(B\mathbb{J}\bm{P}),\label{fs_vorticity}
\end{align}
where $\mathcal{G}$ denotes the inverse of $\Delta$ regarded as a self-adjoint operator on the space of functions $\mathbb{T}^2\rightarrow\mathbb{R}$ with zero mean, and $\Pi$ denotes the $L^2$-orthogonal projection onto divergence-free vector fields $\mathbb{T}^2\rightarrow\mathbb{R}^2$. In conjunction with Eq.\,\eqref{fs_kinetic}, Eqs.\,\eqref{fs_mean_density}-\eqref{fs_vorticity} comprise a fast-slow system on $(n_0,\widetilde{n},\Phi,\bm{\pi},\varrho)$-space, in the sense described in Ref.\,\onlinecite{j_w_burby_slow_2020}. The slow variable is $x=(n_0,\bm{\pi},\varrho)$ while the fast variable is $y = (\widetilde{n},\Phi)$. Notice that on the $O(\epsilon)$ timescale the slow variable $x$ is frozen while the fast variable $y$ obeys 
\begin{align}
\partial_t\widetilde{n} =  -\frac{\epsilon}{\delta}\Delta\Phi,\quad \partial_t\Delta\Phi = \frac{\epsilon}{\delta}n_0\,\widetilde{n}.\label{nearly-periodic}
\end{align}
These equations describe the Langmuir oscillation, with frequency $ \frac{\epsilon}{\delta}\sqrt{n_0}$. (Recall that time is measured in units of the electron cyclotron frequency.) As for all fast-slow systems, this system admits a formal slow manifold of the form $y = y_\delta^*(x)$, or 
\begin{align*}
\widetilde{n}^*_\delta = \widetilde{n}^*_0 + \delta\,\widetilde{n}^*_1 + \delta^2\,\widetilde{n}^*_2 + \dots,\quad \Phi^*_\delta = \Phi^*_0 + \delta\,\Phi^*_1 + \delta^2\,\Phi^*_2 + \dots,
\end{align*}
where each of the coefficients $\widetilde{n}^*_k$, $\Phi^*_k$ is a uniquely-determined functional of $(n_0,\bm{\pi},\varrho)$. Perturbatively solving the invariance equation leads to the simple relations $\Phi^*_0 = \Phi^*_1 = 0$ and $\widetilde{n}^*_0 =  0$, as well as the first non-trivial coefficient
\begin{align}
\widetilde{n}^*_1 = \partial_{\bm{q}}\partial_{\bm{q}}:\bigg(n^{-1}\bm{P}\bm{P} + n\left\langle\bm{\xi}\bm{\xi}\right\rangle\bigg) - \frac{1}{\epsilon}\,\partial_{\bm{q}}\cdot\left(B\mathbb{J}\frac{\bm{\pi}}{n_0}\right).\label{qn_n1}
\end{align}
This expression for $\widetilde{n}^*_1$ specifies the leading-order deviation in the electron density away from the nominal value $n_0$ in a quasineutral motion of the Vlasov-Poisson system. Physically, the formula indicates that any compressive stress experienced by the electrons produces a compensating electrostatic field that maintains incompressibility of the electron flow when the dynamics is quasineutral.

The QNVP system defined in Def.\,\ref{def1} may now be recovered as the leading-order slow manifold reduction of the VP system. The details of this simple calculation are omitted. Notice the QNVP system resembles the incompressible Euler equations with a kinetic closure for the pressure. In contrast to the incompressible Euler equations, in which the scalar pressure adjusts to maintain incompressibility, quasineutral electrostatic plasmas maintain incompressibility by generating compensating electric fields. This physical difference between the two systems is perhaps most vividly illustrated the the nature of the fast oscillations that the two models omit: incompressible Euler omits sound waves (traveling waves), while magnetized quasineutral Vlasov-Poisson omits Langmuir oscillations (standing waves).

\section{Hamiltonian formulation\label{sec:ham_formulation}}
This Section applies the theory of Poisson-Dirac submanifolds\cite{crainic_integrability_2004} to deduce a Hamiltonian formulation of the QNVP system. The argument presented here first establishes the important technical result that the $\delta = 0$ slow manifold (described above) inherits a Poisson bracket from the ambient phase space $\mathfrak{s}^*$.
\begin{proposition}\label{PD_proposition}
Let $\mathfrak{s}^*$ denote the space of tuples $(n,\bm{P},\varrho)$, where $n:\mathbb{T}^2\rightarrow\mathbb{R}$, $\bm{P}:\mathbb{T}^2\rightarrow\mathbb{R}$, and $\varrho:\mathbb{T}^2\times\mathbb{R}^2\rightarrow\mathbb{R}$. The submanifold $\Sigma\subset\mathfrak{s}^*$ defined by 
\begin{align*}
\Sigma = \{(n,\bm{P},\varrho)\in\mathfrak{s}^*\mid \partial_{\bm{q}}n = 0,\quad \partial_{\bm{q}}\cdot\bm{P} = 0\}
\end{align*}
is a Poisson-Dirac submanifold when $\mathfrak{s}^*$ is equipped with the Poisson bracket $\{\cdot,\cdot\}_E$ from Eq.\,\eqref{VP_cbracket}. In particular, there is a Poisson bracket $\{\cdot,\cdot\}_\Sigma$ on $\Sigma$ induced by $\{\cdot,\cdot\}_E$ given explicitly in Eq.\,\eqref{QNVP_cbracket}.
\end{proposition}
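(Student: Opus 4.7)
The plan is to verify that $\Sigma$ is a Poisson-Dirac submanifold by establishing the stronger cosymplectic criterion at every point $m=(n_0,\bm\pi,\varrho)\in\Sigma$:
\[
T_m\mathfrak{s}^* = T_m\Sigma \oplus \pi_E^\sharp\big((T_m\Sigma)^0\big),
\]
where $\pi_E^\sharp:T^*\mathfrak{s}^*\to T\mathfrak{s}^*$ is the sharp map associated to $\{\cdot,\cdot\}_E$. Cosymplectic submanifolds are automatically Poisson-Dirac in the sense of Ref.\,\onlinecite{crainic_integrability_2004}, and the induced bracket is given by the standard prescription $\{F,G\}_\Sigma = \{\tilde F,\tilde G\}_E|_\Sigma$ for any extensions $\tilde F,\tilde G$ whose Hamiltonian vector fields lie in $T\Sigma$ at points of $\Sigma$.

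First, I would identify the annihilator. A tangent vector to $\Sigma$ has the form $(\dot n_0,\dot{\bm\pi},\dot\varrho)$ with $\dot n_0\in\mathbb{R}$ and $\partial_{\bm q}\cdot\dot{\bm\pi}=0$, so with respect to the natural $L^2$ pairing the annihilator $(T_m\Sigma)^0$ consists exactly of covectors $(a,\partial_{\bm q}f,0)$ with $\int a\,d\bm q=0$ and $f:\mathbb{T}^2\to\mathbb{R}$. Next, using Hamilton's equations \eqref{ndot_G}--\eqref{varrhodot_G} with $\delta G/\delta n=a$, $\delta G/\delta\bm P=\partial_{\bm q}f$, $\delta G/\delta\varrho=0$, I would read off $\pi_E^\sharp(a,\partial_{\bm q}f,0)$ on $\Sigma$; in particular its $n$-component is $-n_0\Delta f$, and, when $\partial_{\bm q}f=0$, its $\bm P$-component reduces to the gradient $-\epsilon n_0\,\partial_{\bm q}a$. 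Triviality of the intersection then follows from standard facts about the Laplacian on $\mathbb{T}^2$: spatial constancy of the $n$-component forces $\Delta f=0$ and hence $\partial_{\bm q}f=0$; divergence-freeness of the remaining $\bm P$-component then forces $\Delta a=0$, which combined with $\int a=0$ gives $a=0$. For the complementary spanning property, given $(\dot n,\dot{\bm P},\dot\varrho)$, Hodge-decompose $\dot n=\dot n_0+\dot{\tilde n}$ and $\dot{\bm P}=\dot{\bm\pi}+\partial_{\bm q}\dot\Phi$, then solve the mean-zero Poisson equation $-n_0\Delta f=\dot{\tilde n}$ and subsequently a mean-zero Poisson equation for $a$ so that $\pi_E^\sharp(a,\partial_{\bm q}f,0)$ captures the complement of $T\Sigma$.

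Finally, to obtain the explicit formula \eqref{QNVP_cbracket}, I would construct tangent extensions $\tilde F$ of a given $F:\Sigma\to\mathbb{R}$ by choosing $\delta\tilde F/\delta\bm P=\Pi(\delta F/\delta\bm\pi)$, where $\Pi$ is the Leray projector, so that the first tangency condition $\partial_{\bm q}\cdot(\delta\tilde F/\delta\bm P)=0$ is satisfied automatically. The second tangency condition $\partial_{\bm q}\cdot\partial_t\bm P|_\Sigma=0$ reduces, via \eqref{Pdot_G}, to a scalar Poisson equation on $\mathbb{T}^2$ for the extension of $\delta F/\delta n$ whose source has zero mean---since it arises as the divergence of a vector field on the closed torus---and hence admits a unique mean-zero solution. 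Substituting these tangent extensions into \eqref{VP_cbracket} and simplifying, using $\bm P=\bm\pi$, $n=n_0$, and $\Omega=n_0^{-1}\partial_{\bm q}\cdot(\mathbb{J}\bm\pi)$, yields \eqref{QNVP_cbracket}. The main obstacle I anticipate is the algebraic bookkeeping required to take the divergence of \eqref{Pdot_G}---with its kinetic-stress, Lorentz, and vorticity contributions---and isolate the coefficient of $\Delta(\delta\tilde F/\delta n)$ to verify solvability; once this is done, antisymmetry and the Jacobi identity for $\{\cdot,\cdot\}_\Sigma$ follow automatically from the Poisson-Dirac framework.
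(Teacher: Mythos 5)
Your proposal is correct and follows essentially the same route as the paper: identify the annihilator $(T_\sigma\Sigma)^\circ$ as $(a,\partial_{\bm q}f,0)$ with $a$ mean-zero, push it through Hamilton's equations \eqref{ndot_G}--\eqref{varrhodot_G}, kill the intersection with $T_\sigma\Sigma$ via harmonicity on $\mathbb{T}^2$, and build the induced bracket by solving mean-zero Poisson equations for the extensions of the functional derivatives before substituting into \eqref{VP_cbracket}. The only difference is that you additionally verify the spanning half of the cosymplectic condition (which the paper does not need, since it checks smoothness of the induced bracket directly from the explicit polynomial form of \eqref{QNVP_cbracket}), and you phrase the transversality condition as tangency of Hamiltonian vector fields to $\Sigma$, which is equivalent by antisymmetry of $\pi_E$ to the paper's requirement that the extended differentials annihilate $\widehat{\pi}_E(T_\sigma\Sigma^\circ)$.
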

\noindent The argument concludes by demonstrating that the natural Poisson bracket structure on the $\delta = 0$ slow manifold leads to the claimed Hamiltonian formulation for the QNVP system.
\begin{proposition}\label{hamiltonian_formulation_proposition}
The magnetized quasineutral Vlasov-Poisson system (cf Def. \ref{def1}) is a Hamiltonian system on $\Sigma$ with Hamiltonian $\mathcal{H}_\Sigma = \lim_{\delta\rightarrow 0}\mathcal{H}_{E}\mid \Sigma$ and Poisson bracket $\{\cdot,\cdot\}_\Sigma$. Here $\mathcal{H}_E$ is defined in Eq.\,\eqref{VP_chamiltonian} and $\{\cdot,\cdot\}_\Sigma$ is provided by Proposition \ref{PD_proposition}.
\end{proposition}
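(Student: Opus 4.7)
The plan is to compute Hamilton's equations on $\Sigma$ generated by the Hamiltonian $\mathcal{H}_\Sigma$ and the induced bracket $\{\cdot,\cdot\}_\Sigma$ from Proposition \ref{PD_proposition}, and to verify that the resulting system matches Eqs.\,\eqref{qn_cont}--\eqref{qn_kinetic} term by term.

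First I would make $\mathcal{H}_\Sigma$ explicit. On $\Sigma$ the density $n\equiv n_0$ is spatially constant, so the source $n - (2\pi)^{-2}\int n\,d\bm{q}$ of Poisson's equation vanishes and $\widehat{\varphi}(n) = 0$. The $\delta^2\Lambda^2$ contribution in Eq.\,\eqref{VP_chamiltonian} therefore vanishes pointwise on $\Sigma$, not merely in the limit, so the $\delta\to 0$ restriction is trivially
\begin{align*}
\mathcal{H}_\Sigma(n_0,\bm{\pi},\varrho) = \frac{n_0}{2}\int|\bm{\xi}|^2\,\varrho\,d\bm{\xi}\,d\bm{q} + \frac{1}{2n_0}\int|\bm{\pi}|^2\,d\bm{q}.
\end{align*}
The variational derivatives are $\delta\mathcal{H}_\Sigma/\delta\bm{\pi} = \bm{\pi}/n_0$, which is automatically divergence-free, and $\delta\mathcal{H}_\Sigma/\delta\varrho = \tfrac{1}{2}n_0|\bm{\xi}|^2$, which carries no $\bm{q}$-dependence. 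The $n_0$-derivative is the scalar $\tfrac{1}{2}\int|\bm{\xi}|^2\varrho\,d\bm{\xi}\,d\bm{q} - (2n_0^2)^{-1}\int|\bm{\pi}|^2\,d\bm{q}$. The simplicity of these derivatives --- especially the vanishing of $\partial_{\bm{q}}(\delta\mathcal{H}_\Sigma/\delta\varrho)$ --- is what makes the subsequent bracket computation tractable.

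Next I would substitute these derivatives into the formula for $\{\cdot,\cdot\}_\Sigma$ furnished by Proposition \ref{PD_proposition}. The continuity equation $\partial_t n_0 = 0$ is immediate: every term of $\{F,\mathcal{H}_\Sigma\}_\Sigma$ involving an $n$-derivative is integrated against a spatial gradient and so annihilates the $\bm{q}$-independent $\delta\mathcal{H}_\Sigma/\delta n_0$. For the $\bm{\pi}$ equation I would invoke the defining property of a Poisson-Dirac bracket: $\{F,\mathcal{H}_\Sigma\}_\Sigma$ equals the restriction to $\Sigma$ of $\{F,\widetilde{\mathcal{H}}\}_E$ for any smooth extension $\widetilde{\mathcal{H}}$ whose ambient Hamiltonian vector field is tangent to $\Sigma$. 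Rather than building such an extension explicitly, I would evaluate \eqref{Pdot_G} with $G$ given by the formula for $\mathcal{H}_\Sigma$ at a point of $\Sigma$ and apply the Leray-Hodge projector $\Pi$ to enforce tangency; the result is precisely Eq.\,\eqref{qn_vorticity}. For the kinetic equation I would substitute the derivatives into \eqref{varrhodot_G}: vanishing of $\partial_{\bm{q}}(\tfrac{1}{2}|\bm{\xi}|^2)$ collapses most terms, and the surviving transport equation is \eqref{qn_kinetic}, with the $(B-\epsilon\Omega)$ cyclotron coupling entering through the single-electron bracket $\{\cdot,\cdot\}_e$.

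The main obstacle is the momentum equation. The ambient Hamiltonian vector field obtained by naively treating the formula for $\mathcal{H}_\Sigma$ as a function on $\mathfrak{s}^*$ is \emph{not} tangent to $\Sigma$: the kinetic stress divergence $\partial_{\bm{q}}\cdot(n_0^{-1}\bm{\pi}\bm{\pi} + n_0\langle\bm{\xi}\bm{\xi}\rangle)$ in \eqref{Pdot_G} is generically non-divergence-free, and coupling into \eqref{ndot_G} would drive $n$ away from constancy. The Poisson-Dirac construction of $\{\cdot,\cdot\}_\Sigma$ compensates for this by implicitly adjoining a gradient term --- the fluid analogue of the pressure --- which in the quasineutral plasma interpretation is precisely the compensating electrostatic force predicted physically by Eq.\,\eqref{qn_n1}. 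Showing that this implicit adjustment is exactly the Leray projection $\Pi$ appearing in \eqref{qn_vorticity}, and that no residual $\widehat{\varphi}$-coupling survives (consistent with $\widehat{\varphi}(n_0) = 0$ on $\Sigma$), is the key calculation; once it is in hand, matching the remaining terms is algebraic.
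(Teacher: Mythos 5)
Your overall strategy---write down $\mathcal{H}_\Sigma$ explicitly, compute its functional derivatives, and verify that the induced Hamiltonian vector field reproduces Eqs.\,\eqref{qn_cont}--\eqref{qn_kinetic} component by component---is exactly the strategy of the paper's proof, and your identification of $\mathcal{H}_\Sigma$ as the pure kinetic energy (with the field energy vanishing because $\widehat{\varphi}(n_0)=0$ on $\Sigma$) and of its derivatives agrees with the paper. The treatments of $\dot{n}_0$ and $\dot{\varrho}$ also match: the paper pairs the explicit induced bracket \eqref{QNVP_cbracket} against test functionals $\int\psi_0 n_0\,d\bm{q}$ and $\int\chi\,\varrho\,d\bm{\xi}\,d\bm{q}$, which is the systematic version of your observation that the $\bm{q}$-independence of $\delta\mathcal{H}_\Sigma/\delta\varrho$ collapses most terms.

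Where you genuinely diverge is the momentum equation. The paper never ``projects an ambient vector field'': it pairs \eqref{QNVP_cbracket} against $F=\int\bm{w}\cdot\bm{\pi}\,d\bm{q}$ with $\bm{w}$ divergence-free, so the equation for $\dot{\bm{\pi}}$ is determined only modulo gradients and the Leray projector $\Pi$ appears automatically; all Poisson-Dirac corrections were already absorbed into \eqref{QNVP_cbracket} during the proof of Proposition \ref{PD_proposition}. Your route---evaluate the raw ambient equation \eqref{Pdot_G} and then apply $\Pi$ to enforce tangency---is valid only if the Poisson-Dirac correction to the naive Hamiltonian vector field is (i) a pure gradient in its $\bm{P}$-component and (ii) zero in its $n$- and $\varrho$-components; otherwise ``apply $\Pi$'' is not the same operation as ``correct by an element of $\widehat{\pi}_E(T_\sigma\Sigma^\circ)$.'' You flag this as ``the key calculation'' but do not carry it out, and as written this is the one real gap in your proposal. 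It can be closed cheaply using material already in the proof of Proposition \ref{PD_proposition}: the naive $\dot{n}$ from \eqref{ndot_G} with $\delta G/\delta\bm{P}=\bm{\pi}/n_0$ already vanishes, which forces the correction covector to have $\delta\Phi^*=0$ (this is the statement $\Phi_G=0$ in that proof); then Eqs.\,\eqref{delta_n_biv}--\eqref{delta_varrho_biv} with $\delta\Phi^*=0$ show the correction contributes nothing to $\dot{n}$ or $\dot{\varrho}$ and contributes only the gradient $-\epsilon\,n_0\,\partial_{\bm{q}}\delta n^*$ to $\dot{\bm{P}}$, so the net effect is precisely the $L^2$-orthogonal projection $\Pi$. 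With that lemma supplied, your argument goes through and is arguably more physically transparent (it makes the ``electrostatic pressure'' interpretation explicit), at the cost of redoing reasoning that the paper packages once and for all into the explicit bracket \eqref{QNVP_cbracket}.
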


\begin{proof}[Proof of Proposition \ref{PD_proposition}]
To establish that $\Sigma$ is Poisson-Dirac, we must show (A) that it satisfies the Poisson-Dirac condition $T_\sigma \Sigma \cap \widehat{\pi}_E(T_\sigma\Sigma^\circ) = \{0\}$, and (B) that the induced bracket guaranteed by (A) varies smoothly along $\Sigma$. Here $\widehat{\pi}_E$ denotes the bundle map $T^*\mathfrak{s}^*\rightarrow T\mathfrak{s}^*$ associated with $\{\cdot,\cdot\}_E$ and $T_\sigma\Sigma^\circ\subset T^*_\sigma\mathfrak{s}^*$ denotes the annihilator of $T_\sigma\Sigma\subset T_\sigma\mathfrak{s}^*$. 

(A) Let $\sigma = (n_0,\bm{\pi},\varrho)\in\Sigma$ be a general point in $\Sigma$. Here, $n_0$ is a real constant and $\bm{\pi}$ is a divergence-free vector field on $\mathbb{T}^2$. The tangent space to $\Sigma$ at $\sigma$ is given by 
\begin{align*}
T_\sigma\Sigma = \{(\delta n_0,\delta\bm{\pi},\delta\varrho)\mid \delta n_0\in \mathbb{R},\quad \delta\bm{\pi}:\mathbb{T}^2\rightarrow\mathbb{R}^2,\quad \delta\varrho:\mathbb{T}^2\times \mathbb{R}^2\rightarrow\mathbb{R}, \quad \partial_{\bm{q}}\cdot\delta\bm{\pi} = 0\}.
\end{align*}
The annihilator of $T_\sigma\Sigma$ in $T^*_\sigma\mathfrak{s}^*\ni (\delta n^*,\delta\bm{P}^*,\delta\varrho^*)$ is therefore
\begin{align*}
T_\sigma\Sigma^\circ = \{(\delta n^*,\delta\bm{P}^*,\delta\varrho^*)\in T^*_\sigma\mathfrak{s}^*\mid \exists\delta\Phi^*:\mathbb{T}^2\rightarrow\mathbb{R},\quad\int\delta\Phi^*\,d\bm{q} = 0,\quad \int \delta n^*\,d\bm{q} = 0,\quad \delta\bm{P}^* =\partial_{\bm{q}}\delta\Phi^*,\quad \delta\varrho^* = 0\}.
\end{align*}
Suppose that the tangent vector $\delta z =(\delta n,\delta\bm{P},\delta\varrho) \in T_\sigma\mathfrak{s}^*$ is contained in both $T_\sigma\Sigma$ and $\widehat{\pi}_E(T_\sigma\Sigma^\circ)$. Since $\delta z\in \widehat{\pi}_E(T_\sigma\Sigma^\circ)$ there must be $\delta\alpha=(\delta n^*,\delta\bm{P}^*,\delta\varrho^*) = (\delta n^*,\partial_{\bm{q}}\delta\Phi^*,0)\in T_\sigma\Sigma^\circ$ such that $\delta z = \widehat{\pi}_E(\delta \alpha)$. Notice that $\widehat{\pi}_E(\delta \alpha)$ can be computed explicitly using Eqs.\,\eqref{ndot_G}-\eqref{varrhodot_G} by making the substitutions $\delta G/\delta n \rightarrow \delta n^*$, $\delta G/\delta\bm{P} \rightarrow \delta\bm{P}^*$, and $\delta G/\delta\varrho \rightarrow \delta\varrho^*$. By Eq.\,\eqref{ndot_G} this implies $\delta n = -\partial_{\bm{q}}\cdot(n_0\partial_{\bm{q}}\delta\Phi^*) = - n_0\,\Delta\delta\Phi^*$. On the other hand, $\delta z\in T_\sigma \Sigma$ implies that $\delta n = \delta n_0$ must be constant. Integrating $\delta n_0 = - n_0\,\Delta \delta\Phi^*$ over $\mathbb{T}^2$ implies that constant must vanish, $\delta n_0 = 0$. Thus, $\Delta \delta\Phi^* = 0$, which implies $\delta\Phi^* = 0$ because $\delta\Phi^*$ has zero mean. It follows that $\delta n = -n_0\Delta\delta\Phi^* = 0$. By Eq.\,\eqref{Pdot_G}, $\delta z = \widehat{\pi}_E(\delta\alpha)$ also implies $\delta\bm{P} = -\epsilon\,n_0\partial_{\bm{q}}\delta n^*$. But $\delta z\in T_\sigma\Sigma$ implies $\delta\bm{P} = \delta\bm{\pi}$ is divergence-free. Therefore $\Delta \delta n^* = 0$, which requires $\delta n^* = 0$ because $\delta n^*$ has zero mean. It follows that $\delta\bm{P} = - \epsilon\,n_0\,\partial_{\bm{q}}\delta n^* = 0$. Finally, in light of $\delta n^* = \delta\Phi^* = 0$ and $\delta z = \widehat{\pi}_E(\delta\alpha)$, Eq.\,\eqref{varrhodot_G} implies $\delta\varrho = 0$. We have shown $\delta z \in T_\sigma\Sigma\cap \widehat{\pi}_E(T_\sigma\Sigma^\circ)$ implies $\delta z = 0$, which establishes the claim $T_\sigma\Sigma\cap \widehat{\pi}_E(T_\sigma\Sigma^\circ) = \{0\}$ for all $\sigma\in\Sigma$.

(B) Part (A) of the proof guarantees that there is a bilinear bracket operation $\{\cdot,\cdot\}_\Sigma$ between functionals $F,G:\Sigma\rightarrow\mathbb{R}$. To complete the proof we must demonstrate that $\{F,G\}_\Sigma$ is a smooth function on $\Sigma$ when $F,G$ are both smooth. For this demonstration we will find and analyze an explicit formula for $\{F,G\}_\Sigma$. The value of the bracket $\{F,G\}_\Sigma$ at $\sigma=(n_0,\bm{\pi},\varrho)\in \Sigma$ is given by $\{F,G\}_\Sigma(\sigma) = \widetilde{(dF_\sigma)}\cdot \pi_{E}\cdot \widetilde{(dG_\sigma)}$, where $\pi_E$ denotes the Poisson tensor on $\mathfrak{s}^*$ associated with $\{\cdot,\cdot\}_E$. Here $dF_\sigma,dG_\sigma\in T^*_\sigma\Sigma$ denote the differentials of $F,G$ at the point $\sigma$ and $\widetilde{(dF_\sigma)},\widetilde{(dG_\sigma)}\in T^*_\sigma\mathfrak{s}^*$ are any covectors in $T^*_\sigma\mathfrak{s}^*$ such that
\begin{align}
&\widetilde{(dF_\sigma)}\mid T_\sigma\Sigma = dF_\sigma,\quad \widetilde{(dG_\sigma)}\mid T_\sigma\Sigma = dG_\sigma\label{extension_property}\\
&\widetilde{(dF_\sigma)}\mid \widehat{\pi}_E(T_\sigma\Sigma^\circ) = 0,\quad \widetilde{(dG_\sigma)}\mid \widehat{\pi}_E(T_\sigma\Sigma^\circ) = 0.\label{transverse_property}
\end{align}
When applied to $\delta \sigma = (\delta n_0,\delta\bm{\pi},\delta\varrho)\in T_\sigma\Sigma$ the value of $dG_\sigma$ is
\begin{align*}
dG_\sigma(\delta \sigma) = \int \frac{\delta G}{\delta n_0}\,\delta n_0\,d\bm{q} + \int \frac{\delta G}{\delta \bm{\pi}}\cdot\delta\bm{\pi}\,d\bm{q} + \int\frac{\delta G}{\delta\varrho}\,\delta\varrho\,d\bm{\xi}\,d\bm{q},
\end{align*}
where $\delta G/\delta n_0$ is required to be constant and $\delta G/\delta\bm{\pi}$ is required to be divergence-free.
Suppose that the value of $\widetilde{dG_\sigma}$ when applied to $\delta z = (\delta n,\delta \bm{P},\delta\varrho)\in T_\sigma\mathfrak{s}^*$ is given by
\begin{align*}
\widetilde{dG_\sigma}(\delta z) = \int \widetilde{\frac{\delta G}{\delta n_0}}\,\delta n\,d\bm{q} + \int \widetilde{\frac{\delta G}{\delta\bm{\pi}}}\cdot \delta\bm{P}\,d\bm{q} + \int \widetilde{\frac{\delta G}{\delta\varrho}}\,\delta\varrho\,d\bm{\xi}\,d\bm{q},
\end{align*}
where $\widetilde{\frac{\delta G}{\delta n_0}}$, $\widetilde{\frac{\delta G}{\delta\bm{\pi}}}$, and $\widetilde{\frac{\delta G}{\delta\varrho}}$ are unknown coefficients. The condition \eqref{extension_property} implies
\begin{align*}
\widetilde{\frac{\delta G}{\delta n_0}}  = \frac{\delta G}{\delta n_0} + \psi_G,\quad \widetilde{\frac{\delta G}{\delta\bm{\pi}}}  = \frac{\delta G}{\delta\bm{\pi}} + \partial_{\bm{q}}\Phi_G,\quad \widetilde{\frac{\delta G}{\delta\varrho}}  = \frac{\delta G}{\delta \varrho},
\end{align*}
where $\psi_G$ and $\Phi_G$ are both functions on $\mathbb{T}^2$ with zero mean, not determined by condition \eqref{extension_property}. In order for $\widetilde{dG_\sigma}$ to satisfy property \eqref{transverse_property}, for each $\alpha = (\delta n^*,\partial_{\bm{q}}\delta\Phi^*,0)\in T_\sigma\Sigma^0$ the tangent vector $\delta z = \widehat{\pi}_E(\alpha)\in T_\sigma\mathfrak{s}^*$ must annihilate $\widetilde{dG_\sigma}$. An explicit expression for $\delta z = (\delta n,\delta\bm{P},\delta\varrho)$ follows from Eqs.\,\eqref{ndot_G}-\eqref{varrhodot_G} using the substitutions $\delta G/\delta n\rightarrow \delta n^*$, $\delta G/\delta \bm{P}\rightarrow \partial_{\bm{q}}\delta \Phi^*$, and $\delta G/\delta\varrho\rightarrow 0$. The result is
\begin{align}
&\delta n +\epsilon\partial_{\bm{q}}\cdot\left(n_0\partial_{\bm{q}}\delta\Phi^*\right) = 0 \label{delta_n_biv}\\
%%%
&\delta\bm{P} + \epsilon\,\partial_{\bm{q}}\left( {\bm{\pi}}\cdot \partial_{\bm{q}}\delta\Phi^*\right) + \epsilon\,\partial_{\bm{q}}\cdot\left(\partial_{\bm{q}}\delta\Phi^*\right){\bm{\pi}} + \epsilon\,n_0\,\Omega\mathbb{J}\partial_{\bm{q}}\delta\Phi^* = - \epsilon\,n_0\,\partial_{\bm{q}}\delta n^* + n_0\,B\,\mathbb{J}\partial_{\bm{q}}\delta\Phi^* \label{delta_P_biv}\\
%%%
&\delta\varrho + \bigg[\epsilon\,\partial_{\bm{q}}\delta\Phi^* \bigg]\cdot\partial_{\bm{q}}\varrho- \partial_{\bm{\xi}}\cdot\bigg(\bigg[\epsilon\,\partial_{\bm{q}}\left(\bm{\xi}\cdot\partial_{\bm{q}}\delta\Phi^*\right)\bigg]\varrho\bigg) = 0.\label{delta_varrho_biv}
\end{align}
The condition $\widetilde{dG_\sigma}(\delta z) = 0$ therefore implies
\begin{align*}
&\Delta \Phi_G = 0\\
&\epsilon\,\Delta \psi_G = \partial_{\bm{q}}\cdot\bigg([B -\epsilon\,\Omega]\mathbb{J}\frac{\delta G}{\delta\bm{\pi}}\bigg) - \epsilon\,\Delta\left(\frac{\bm{\pi}}{n_0}\cdot \frac{\delta G}{\delta\bm{\pi}}\right) - \epsilon\,\partial_{\bm{q}}\partial_{\bm{q}}:\left\langle\left(\partial_{\bm{\xi}}\frac{1}{n_0}\frac{\delta G}{\delta \varrho}\right)\bm{\xi}\right\rangle + \epsilon\,\partial_{\bm{q}}\cdot\left\langle \frac{1}{n_0}\frac{\delta G}{\delta \varrho}\partial_{\bm{q}}\ln\varrho\right\rangle,
\end{align*}
which uniquely determines $\Phi_G$ and $\psi_G$. Note in particular that $\Phi_G = 0$. The bracket $\{F,G\}_\Sigma = \widetilde{dF_\sigma}\cdot\pi_E\cdot\widetilde{dG_\sigma}$ may now be computed explicitly by making the following substitutions in Eq.\,\eqref{VP_cbracket}:
\begin{align*}
\frac{\delta G}{\delta n}&\rightarrow \frac{\delta G}{\delta n_0} -\frac{\bm{\pi}}{n_0}\cdot\frac{\delta G}{\delta\bm{\pi}} + \mathcal{G}\bigg(\partial_{\bm{q}}\cdot\bigg[\frac{1}{\epsilon}(B-\epsilon\,\Omega)\mathbb{J}\frac{\delta G}{\delta\bm{\pi}} - \partial_{\bm{q}}\cdot\left(\left\langle\partial_{\bm{\xi}}\left(\frac{1}{n_0}\frac{\delta G}{\delta\varrho}\right)\bm{\xi}\right\rangle\right) + \partial_{\bm{q}}\left\langle\frac{1}{n_0}\frac{\delta G}{\delta \varrho}\right\rangle - \left\langle\partial_{\bm{q}}\frac{1}{n_0}\frac{\delta G}{\delta\varrho}\right\rangle\bigg]\bigg)+\text{const.}\\
\frac{\delta G}{\delta\bm{P}}&\rightarrow \frac{\delta G}{\delta\bm{\pi}}\\
\frac{\delta G}{\delta\varrho} & \rightarrow\frac{\delta G}{\delta\varrho}.
\end{align*}
The result is
\begin{align}
&\{F,G\}_{\Sigma} = \int (B-\epsilon\,\Omega)\,\frac{\delta F}{\delta\bm{\pi}}\cdot \mathbb{J}\frac{\delta G}{\delta\bm{\pi}} \,n_0\,d\bm{q} \nonumber\\
&-\epsilon\int\bigg(\frac{\delta F}{\delta\bm{\pi}}\cdot \bigg[    \left\langle\partial_{\bm{q}}\frac{\delta G}{\delta\varrho}\right\rangle\bigg]-\frac{\delta G}{\delta\bm{\pi}}\cdot \bigg[  \left\langle\partial_{\bm{q}}\frac{\delta F}{\delta\varrho}\right\rangle\bigg]\bigg)\,d\bm{q}\nonumber\\
%%%
&- \epsilon \int \bigg(\left[\frac{\delta F}{\delta\bm{\pi}} - \left\langle\partial_{\bm{\xi}}\frac{1}{n_0}\frac{\delta F}{\delta\varrho}\right\rangle\right]\cdot\partial_{\bm{q}}\cdot\left[\left\langle\partial_{\bm{\xi}}\left(\frac{\delta G}{\delta\varrho}\right)\bm{\xi}\right\rangle\right] -\left[\frac{\delta G}{\delta\bm{\pi}} - \left\langle\partial_{\bm{\xi}}\frac{1}{n_0}\frac{\delta G}{\delta\varrho}\right\rangle\right]\cdot\partial_{\bm{q}}\cdot\left[\left\langle\partial_{\bm{\xi}}\left(\frac{\delta F}{\delta\varrho}\right)\bm{\xi}\right\rangle\right] \bigg)\,d\bm{q}\nonumber\\
%%%
&+\epsilon\int \bigg(\partial_{\bm{q}}\frac{1}{n_0}\frac{\delta F}{\delta \varrho}\cdot \partial_{\bm{\xi}}\frac{1}{n_0}\frac{\delta G}{\delta\varrho} -\partial_{\bm{q}}\frac{1}{n_0}\frac{\delta G}{\delta \varrho}\cdot \partial_{\bm{\xi}}\frac{1}{n_0}\frac{\delta F}{\delta\varrho} \bigg)\,n_0\,\varrho\,d\bm{\xi}\,d\bm{q} + \int (B-\epsilon\,\Omega)\bigg(\partial_{\bm{\xi}}\frac{1}{n_0}\frac{\delta F}{\delta\varrho}\bigg)\cdot\mathbb{J}\bigg(\partial_{\bm{\xi}}\frac{1}{n_0}\frac{\delta G}{\delta\varrho}\bigg)\,n_0\,\varrho\,d\bm{\xi}\,d\bm{q}\nonumber\\
%%%
&-\epsilon\int \bigg(\left\langle\partial_{\bm{q}}\frac{1}{n_0}\frac{\delta F}{\delta\varrho}\right\rangle\cdot\left\langle\partial_{\bm{\xi}}\frac{1}{n_0}\frac{\delta G}{\delta\varrho}\right\rangle-\left\langle\partial_{\bm{q}}\frac{1}{n_0}\frac{\delta G}{\delta\varrho}\right\rangle\cdot\left\langle\partial_{\bm{\xi}}\frac{1}{n_0}\frac{\delta F}{\delta\varrho}\right\rangle\bigg)\,n\,d\bm{q}-\int (B-\epsilon\,\Omega)\, \left\langle\partial_{\bm{\xi}}\frac{1}{n_0}\frac{\delta F}{\delta\varrho}\right\rangle\cdot \mathbb{J} \left\langle\partial_{\bm{\xi}}\frac{1}{n_0}\frac{\delta G}{\delta\varrho}\right\rangle\,n_0\,d\bm{q}.\label{QNVP_cbracket}
\end{align}
Notice that the inverse Laplacian $\mathcal{G}$ does not appear in the bracket because of $\partial_{\bm{q}}\cdot (\delta G/\delta\bm{\pi}) = 0$. This bracket is polynomial in $(\bm{\pi},\varrho)$ and regular in $n_0$ away from $n_0 = 0$. It follows that $\Sigma$ is a Poisson-Dirac submanifold in $\mathfrak{s}^*$. 

\end{proof}

\begin{proof}[Proof of Proposition \ref{hamiltonian_formulation_proposition}]
On $\Sigma$ the fluctuating part of the density satisfies $n - n_0 = O(\delta^2)$, which implies that the energy stored in the electric field vanishes in the limit $\delta\rightarrow 0$. The Hamiltonian $\mathcal{H}_\Sigma$ is therefore the kinetic energy
\begin{align*}
\mathcal{H}_\Sigma(n_0,\bm{\pi},\varrho) & = \frac{1}{2}\int |\bm{\xi}|^2\,n_0\,\varrho\,d\bm{\xi}\,d\bm{q} + \frac{1}{2}\int|\bm{\pi}/n_0|^2\,n_0\,d\bm{q}.
\end{align*}
The functional derivatives of $\mathcal{H}_\Sigma$ are
\begin{gather*}
\frac{\delta\mathcal{H}_\Sigma}{\delta n_0} = \frac{1}{2}\int |\bm{\xi}|^2\,\varrho\,d\bm{\xi}\,d\bm{q} -\frac{1}{2}\int \left|\frac{\bm{\pi}}{n_0}\right|^2\,d\bm{q},\quad \frac{\delta\mathcal{H}_\Sigma}{\delta\bm{\pi}} = \frac{\bm{\pi}}{n_0},\quad \frac{\delta\mathcal{H}_\Sigma}{\delta \varrho} = \frac{1}{2}n_0\,|\bm{\xi}|^2.
\end{gather*}
It is now straightforward to compute the components of the Hamiltonian vector field $X_{\mathcal{H}_\Sigma} = (\dot{n}_0,\dot{\bm{\pi}},\dot{\varrho})$ using the formula \eqref{QNVP_cbracket} for $\{\cdot,\cdot\}_\Sigma$. 

To find $\dot{n}_0$ set $G = \mathcal{H}_\Sigma$ and $F = \int \psi_0\,n_0\,d\bm{q}$, where $\psi_0$ is a real constant. The bracket $\{F,G\}_{\Sigma} = \int \dot{n}_0\,\psi_0\,d\bm{q}$ vanishes. It follows that $\dot{n}_0 = 0$. This reproduces Eq.\,\eqref{qn_cont}.

To find $\dot{\bm{\pi}}$ set $G = \mathcal{H}_\Sigma$ and $F = \int \bm{w}\cdot\bm{\pi}\,d\bm{q}$, where $\bm{w}$ is any divergence-free vector field on $\mathbb{T}^2$. Computing the bracket $\{F,G\}_\Sigma = \int \bm{w}\cdot\dot{\bm{\pi}}\,d\bm{q}$  implies
\begin{align*}
\dot{\bm{\pi}} & = \Pi\bigg((B-\epsilon\,\Omega)\mathbb{J}{\bm{\pi}} - \epsilon\,n_0\,\partial_{\bm{q}}\cdot\langle\bm{\xi}\bm{\xi}\rangle\bigg).
\end{align*}
This recovers Eq.\,\eqref{qn_vorticity} in light of the vector identity $\partial_{\bm{q}}\cdot(n_0^{-1}\bm{\pi}\bm{\pi}) = n_0^{-1}\partial_{\bm{q}}|\bm{\pi}|^2 + \Omega\mathbb{J}\bm{\pi}.$

To find $\dot{\varrho}$ set $G = \mathcal{H}_\Sigma$ and $F = \int \chi\,\varrho\,d\bm{\xi}\,d\bm{q}$, where $\chi$ is an arbitrary function on phase space space. Computing the bracket $\{F,G\}_\Sigma = \int \chi\dot{\varrho}\,d\bm{\xi}\,d\bm{q}$ implies
\begin{align*}
\dot{\varrho}&+ \partial_{\bm{q}}\cdot\bigg(\bigg[\epsilon\frac{\bm{\pi}}{n_0} + \epsilon\,\bm{\xi} \bigg]\varrho\bigg) + \partial_{\bm{\xi}}\cdot\bigg(\bigg[\epsilon\,\partial_{\bm{q}}\cdot\left\langle\bm{\xi}\bm{\xi}\right\rangle - \epsilon\,\partial_{\bm{q}}\left(\frac{\bm{\pi}}{n_0}\right)\cdot\bm{\xi} + (B-\epsilon\,\Omega)\mathbb{J}\bm{\xi}\bigg]\varrho\bigg)\\
& - \partial_{\bm{q}}\cdot\bigg(\bigg[ \epsilon\,\left\langle \bm{\xi}\right\rangle\bigg]\varrho\bigg) + \partial_{\bm{\xi}}\cdot\bigg(\bigg[\epsilon\,\partial_{\bm{q}}\left\langle\bm{\xi}\right\rangle \cdot\bm{\xi} - (B-\epsilon\,\Omega)\mathbb{J}\left\langle\bm{\xi}\right\rangle\bigg]\varrho\bigg) = 0.
\end{align*}
This agrees with Eq.\,\eqref{qn_kinetic} in light of the vector identity $(\partial_{\bm{q}}\bm{\pi})\cdot\bm{\xi} = \bm{\xi}\cdot\partial_{\bm{q}}\bm{\pi} - \Omega\,\mathbb{J}\bm{\xi}$. This completes the demonstration that Hamilton's equations with Hamiltonian $\mathcal{H}_\Sigma$ and Poisson bracket $\{\cdot,\cdot\}_\Sigma$ agree with the magnetized quasineutral Vlasov-Poisson system, as claimed.
\end{proof}

\section{Discussion}
The above analysis used slow manifold reduction and elements of Poisson geometry to find a Hamiltonian formulation for the QNVP system. This calculation fits into a more general pattern\cite{r_s_mackay_slow_2004,j_w_burby_slow_2020,burby_slow_2022}, wherein non-dissipative reduced models formulated as slow manifold reductions naturally inherit Hamiltonian formulations from their parent models. In the context of models for plasmas, the earliest demonstration of a slow manifold inheriting Hamiltonian structure was given in Ref.\,\onlinecite{burby_magnetohydrodynamic_2017}, where symplectic methods were used to deduce the Hamiltonian formulation of ideal magnetohydrodynamics, as well as its higher-order corrections. Similar methods were used to identify Hamiltonian formulations of the guiding center plasma model\cite{burby_hamiltonian_2018}, perturbative corrections of the Vlasov-Poisson model\cite{miloshevich_hamiltonian_2021}, and nonlinear WKB reductions of variational fluid models\cite{burby_variational_2020}. The symplectic approach to inheritance in fluid and kinetic plasma models usually entails introducing Lagrangian variables in order to obtain a symplectic formulation of the problem, reducing to a symplectic submanifold, and then quotienting by particle relabeling symmetry in order to find an Eulerian expression of the Hamiltonian structure. The Poisson geometric approach to inheritance used in this Article, which was dubbed the Poisson-Dirac constraint method in \cite{Pinto_Burby_2025}, is more efficient than the earlier symplectic methods because it does not require introducing Lagrangian variables. On the other hand, it is currently unclear how to handle higher-order corrections to the slow manifold in the Poisson setting. While the general theory of Poisson-Dirac submanifolds leads to brackets on perturbed slow manifolds as perturbation series, it is unclear how to truncate those series while maintaining the Jacobi identity. This stands in contrast to the symplectic case, where near-identity transformations of the perturbed slow manifold can be introduced that cause the perturbation series for the symplectic form to truncate at finite order. This dichotomy is caused by the relative complexity of deformation theory for Poisson structures when compared with deformation theory for symplectic structures.

The slow manifold analysis presented here used $\delta$, the ratio of Debye length to field scale length, as the timescale separation parameter. More precisely, the limit studied was $\delta\rightarrow 0$ holding $\epsilon,\Lambda$ fixed. In future studies of higher-order corrections to the QNVP system in the strongly magnetized regime it would be better to use $\overline{\delta} = \delta/\epsilon$ in place of $\delta$. Doing so will alleviate the issue suggested by Eq.\,\eqref{qn_n1}, where the quasineutral slow manifold diverges as $\epsilon\rightarrow 0$ at fixed $\delta$. 

The Hamiltonian formulation of the QNVP model presented here may admit a concomittant variational formulation. As presented here, the model takes the form of a hybrid fluid-kinetic system. Tronci\cite{tronci_lagrangian_2013} previously developed methods for identifying variational formulations of this type of model, which have been subsequently extended to hybrid quantum-classical model, e.g. Ref.\,\onlinecite{gay-balmaz_evolution_2022}. It may be interesting to extend Tronci's ideas to the QNVP system.

This Article studied the compound asymptotic limit where quasineutrality is applied after applying the electrostatic approximation. Since both quasineutrality and electrostatics correspond to singular limits of the Vlasov-Maxwell model, it is possible that a different model emerges when applying the electrostatic approximation \emph{after} the quasineutral approximation. The refactoring of the quasineutral-electrostatic compound asymptotic limit does not appear to have been discussed by previous authors. It would be interesting to both formulate the model and find its Hamiltonian structure.

The formal analysis presented here assumes a doubly-periodic planar spatial domain. There do not appear to be any essential obstacles to extending our calculations to a three-dimensional triply-periodic spatial domain. However, including a spatial boundary, in any space dimension, would introduce genuine theoretical challenges. In fact it is totally unknown whether Vlasov-Poisson in a bounded spatial domain comprises an infinite-dimensional Hamiltonian system.

Solutions of the VP model initialized on the quasineutral slow manifold will exhibit only very small-amplitude oscillations at the Lagmuir oscillation timescale. However, solutions initialized close to but not exactly on the slow manifold will excite dynamically-important Langmuir oscillations, described to leading order by Eq.\,\eqref{nearly-periodic}. Slow manifold reduction does not capture the effects of these oscillations. On the other hand, Eq.\,\eqref{nearly-periodic} also demonstrates that the VP system, when expressed in terms of $(n_0,\widetilde{n},\bm{\Phi},\bm{\pi},\varrho)$ as in Eqs.\,\eqref{fs_mean_density}-\eqref{fs_vorticity}, comprises a nearly-periodic system in the sense of Kruskal\cite{kruskal_asymptotic_1962,burby_general_2020,burby_normal_2021,burby_nearly_2023}. Consequently, the VP system admits an formal $U(1)$-symmetry to all orders in $\delta$. It would be interesting to demonstrate this formal $U(1)$-symmetry admits a formal momentum map. If true, this momentum map would provide an adiabatic invariant for the VP system with the physical interpretation of Langmuir oscillation wave action\cite{brizard_wave-action_1993}. While it is already clear how to deduce the adiabatic invariant associated with Kruskal's $U(1)$ symmetry on symplectic and pre-symplectic phase space\cite{burby_general_2020}, the necessary theory has yet to be developed for Poisson phase spaces. In light of the recent formulation of a non-perturbative guiding center model\cite{j_w_burby_nonperturbative_2025}, this line of research offers a possible avenue to formulating a non-perturbative extension of the QNVP model, allowing for larger Debye lengths.

\begin{acknowledgments}
This material is based on work supported by the U.S. Department of Energy, Office of Science, Office of Advanced Scientific Computing Research, as a part of the Mathematical Multifaceted Integrated Capability Centers program, under Award Number DE-SC0023164. It was also supported by U.S. Department of Energy grant \# DE-FG02-04ER54742. The contributions of DAK and GNT in this work were conducted in the framework of participation of the University of Ioannina in the National Programme for the Controlled Thermonuclear Fusion, Hellenic Republic. E. Tassi acknowledges support from the GNFM.
\end{acknowledgments}

\section*{Data Availability Statement}
Data sharing not applicable – no new data generated

\nocite{}
\bibliography{references,add_refs}% Produces the bibliography via BibTeX.

\end{document}